\newtheorem{theorem}{Theorem}
\newtheorem{proposition}[theorem]{Proposition}
\newtheorem{lemma}[theorem]{Lemma}
\newtheorem{corollary}[theorem]{Corollary}
{}
\definecolor{steelblue}{RGB}{70,130,180}
\title{\LARGE \bf
Higher-Order Moment-Based Anomaly Detection}
\author{Venkatraman Renganathan, Navid Hashemi, Justin Ruths, and Tyler H. Summers
 \thanks{This work was partially supported by the Army Research Office and was accomplished under Grant Number: W911NF-17-1-0058 and the Air Force Office of Scientific Research under award number FA2386-19-1-4073.}
\thanks{The authors are with the Department of Mechanical Engineering, The University of Texas at Dallas, 800 W. Campbell Rd, Richardson, TX, USA. Email: {\tt\small (vrengana, navid.hashemi, jruths, tyler.summers)@utdallas.edu}}%
}
\begin{document}

\maketitle
\thispagestyle{empty}
\pagestyle{empty}

\begin{abstract}
The identification of anomalies is a critical component of operating complex, large-scale and geographically distributed cyber-physical systems. While designing anomaly detectors, it is common to assume Gaussian noise models to maintain tractability; however, this assumption can lead to the actual false alarm rate being significantly higher than expected. Here we design a distributionally robust threshold of detection using finite and fixed higher-order moments of the detection measure data such that it guarantees the actual false alarm rate to be upper bounded by the desired one. Further, we bound the states reachable through the action of a stealthy attack and identify the trade-off between this impact of attacks that cannot be detected and the worst-case false alarm rate. Through numerical experiments, we illustrate how knowledge of higher-order moments results in a tightened threshold, thereby restricting an attacker's potential impact.
\end{abstract}

\section{INTRODUCTION} \label{sec_intro}
From critical infrastructures and industrial process control to autonomous driving and various biomedical applications, dynamical control systems are increasingly able to be instrumented with new sensing and actuation capabilities. These cyber-physical systems (CPS) comprise growing webs of interconnected feedback loops and must operate efficiently and resiliently in dynamic and uncertain environments. As these systems become large, devising both model-based \cite{murguia2019model,umsonst2018anomaly} and data-driven \cite{renganathan2020distributionally,khojasteh2020learning,martinez_tuning} methods for detecting anomalies (such as component failures or malicious attacks) are critical for their robust and efficient operation. Such critically important cyber-physical networks have become an attractive target to attackers. These systems are large and complex and are often not monitored well enough, enabling attackers to manipulate the system without being detected and cause damage \cite{zetter2014countdown,cherepanov2017industroyer}.

To simplify the analysis and design, often such complex cyber-networks are modeled as a discrete-time linear time invariant system with Gaussian noises. However, this can lead to a significant miscalculation of probabilities and risk if the underlying processes behave differently, for example due to various nonlinearities or malicious attacks. In the context of attacks, it is possible for an attacker to modify the sensor outputs and effectively generate aggressive and strategic noise profiles to sabotage the operation of the system. With stochastic optimization techniques, particularly using the emerging area of distributionally robust optimization (DRO) approaches \cite{dr_wiesemann}, these limitations can be recognized and addressed. DRO enables modelers to explicitly incorporate inherent ambiguity in probability distributions into optimization problems. DRO approaches can be categorized based on the form of the ambiguity set. There are several different parameterizations, including those based on moments, support, directional derivatives \cite{dr_goh}, and Wasserstein balls \cite{dr_peyman}. In practice, we have access to only a finite amount of historical data. However, it is possible to use finite historical data and guarantee resiliency in such critical cyber-physical networks. Here, we propose to use moment-based DRO methods to improve modeling and reduce false alarm rates in cyber-physical networks. 

In the context of attacks, the detector tuning has a direct implication on the effect an attacker can have while still remaining stealthy \cite{navid,umsonst2018anomaly}. A model-based approach to attack detection uses a detector that raises alarms when there is a large enough discrepancy between the actual and predicted measurements, a statistic termed the residual. The detector's sensitivity can be increased by decreasing the threshold of detection, but there is an inherent trade-off between sensitivity and the rate at which false alarms are generated. Keeping false alarms to a manageable level requires adjusting sensitivity and the tuning of the detector threshold is typically informed by the distribution of the residual. Authors in \cite{navid_GMM} used Gaussian Mixture Model to approximate the arbitrary noise distributions and obtained a detector threshold corresponding to a desired false alarm rate. 

However, when noise distributions are only known to an ambiguity set, traditional tools and approximations no longer suffice to select the threshold and so we turn to a distributionally robust approach. Interest in a DRO-informed perspective on detector tuning is supported by recent work on using a Wasserstein metric \cite{martinez_tuning}. Our moment-based ambiguity set formulation includes all distributions with fixed moments up to some order. The problem of designing anomaly detector thresholds subject to moment constraints of system uncertainties can be addressed using Generalized Moment Problems described in \cite{lasserre2010moments, lasserre2001global}. Further, the conditions for a truncated (finite) moment sequence to represent a probability measure were studied in \cite{curto2008extremal, helton2012semidefinite}. Authors in \cite{bertsimas_popescu} proposed semidefinite programs to compute a probability bound for a random variable lying in a set with known moments up to some order. These techniques can be utilized to design detector thresholds for residual distributions consistent with finite fixed moments up to some order. 

\textit{Contributions:} This paper is a significant extension of our previous work \cite{renganathan2020distributionally} where we used a moment-based ambiguity set formulation with fixed first two moments (Proposition \ref{prop_dr_alfa}) to obtain a detector threshold via generalized Chebyshev inequality. The main contributions of the present paper are:
\begin{itemize}
    \item We propose an approach to construct moment-based ambiguity sets with fixed moments up to $k\textsuperscript{th}$ order for the anomaly detection measure and design an anomaly detector threshold for CPSs that exhibit non-Gaussian uncertainties. The approach can utilize either residual moments obtained through a dynamic model or estimated directly from residual data.
    \item We use a semidefinite program (SDP) defined using higher-order moments of the detection measure to find a sharper probability bound for classifying the residual with an improved detector threshold (Theorem \ref{thm_alfa_higher_moments}).
    \item We illustrate empirically that inclusion of higher-order moments results in tightened threshold (Lemma \ref{lemma_1}), thereby restricting an attacker's potential impact, and also prove that the volume of the attack-reachable set shrinks with a tightened threshold (Corollary  \ref{cor_reach_shrinks}).
\end{itemize} 
While anomaly detection is widely studied in CPS literature, our distributionally robust approach using higher-order moments of the detection measure data marks the novel contribution of this paper. The rest of the paper is organized as follows. Section \ref{sec_momemt_problem} formulates the problem using moment-based ambiguity sets. In Section \ref{sec_dr_threshold_design}, the design of anomaly detector threshold is discussed. Section \ref{sec_reach_sets} describes the procedure to find the boundary of the reachable sets obtained using distributionally robust tuned detector. Section \ref{sec_numerical_simulation} presents the numerical results with inferences. Finally, Section \ref{sec_conclusion} concludes and summarizes the future research directions.
\section{Problem Formulation Using Higher-Order Moment-Based Ambiguity Sets} \label{sec_momemt_problem}
In this section, we propose a framework for designing an anomaly detector threshold for cyberphysical systems. The approach can utilize either model-based propagation of residual moments or data-driven estimation of detection measure moments directly from data. 
\subsection{Model-based Problem Formulation}
Here, we model an uncertain cyber-physical system as a stochastic discrete-time linear system
\begin{align} \label{eqn_uncertain_cps}
    x_{t+1} &= A x_t + B u_t + w_t, \quad t \in \mathbb{N} \\
    y_t &= C x_t + v_t,
\end{align}
where $x_t \in \mathbb{R}^n, u_t \in \mathbb{R}^m$ are the system state and input respectively at time $t$. The matrices $A$ and $B$ denote the system matrix and control input matrix, respectively. The output $y_t \in \mathbb{R}^p$ aggregates a linear combination of the states with the observation matrix $C \in \mathbb{R}^{p \times n}$. We assume that the pair $(A,C)$ is detectable and $(A,B)$ is stabilizable. The process noise $w_t \in \mathbb{R}^{n}$ and the sensor noise $v_t \in \mathbb{R}^{p}$ are modeled as zero-mean random vectors independent and identically distributed across time with covariance matrix $\Sigma_{w}, \Sigma_{v}$ respectively. Let $\kappa = (k_1,\dots,k_n)^{\top}$ with $k_j \in \mathbb{Z}_{+}$ non-negative integers and $J_k = \{ \kappa | \sum^{n}_{i = 1} k_i \leq k \}$. Further, assume that the feasible first $k$ moment sequence of the distributions of the $w_t, v_t$ namely $M^{\kappa}_{w}, M^{\kappa}_{v}, \kappa \in J_k$ respectively are known. The distributions $P_w$ of $w_t$ and $P_v$ of $v_t$ are unknown (and not necessarily Gaussian and possibly heavy-tailed\footnote{For the purposes of this paper, we consider heavy-tailed distributions as those whose moments above a certain order may be infinite, in which case their tails are heavier than a Gaussian. We assume the moments up to order $k$ are finite.}) and will be assumed to belong to the $k$-moments-based ambiguity sets of distributions $\mathcal{P}^{w}_{k}$ and $\mathcal{P}^{v}_{k}$ respectively defined as follows
\begin{align} \label{eqn_ambig_set_w_high}
    \mathcal{P}^{w}_{k} = \{ P_{w} \mid \mathbb{E}[w^{k_1}_{t_1} w^{k_2}_{t_2} \dots w^{k_n}_{t_n} ] = M^{\kappa}_{w}, \ \kappa \in J_k \}, \\ 
   \mathcal{P}^{v}_{k} = \{ P_{v} \mid \mathbb{E}[v^{k_1}_{t_1} v^{k_2}_{t_2} \dots v^{k_p}_{t_p} ] = M^{\kappa}_{v}, \ \kappa \in J_k \}  \label{eqn_ambig_set_v_high}.
\end{align}
When the actual measurement $y_t$ is corrupted by an additive attack, $\delta_t \in \mathbb{R}^p$, the true output of the system fed to the controller becomes
\begin{align}
    \Bar{y}_t = y_t + \delta_t = C x_t + v_t + \delta_t.
\end{align}
We utilize a steady-state Kalman filter to construct a state estimate $\hat{x}_{t}$ to minimize the squared norm of the estimation error $e_t = x_t - \hat{x}_t$ where,
\begin{align}
    \hat{x}_{t+1} &= A \hat{x}_t + B u_t + L (\bar{y}_t - C \hat{x}_t), 
\end{align}
and the estimation error evolves as
\begin{align}\label{eqn_error_dynamics}
    e_{t+1} &= (A - LC) e_t + w_t - L v_t - L \delta_t. 
\end{align}
In the absence of attacks (that is, $\delta_t = 0$) and when the covariance matrices of the noises are fixed and known, the Kalman gain $L = PC^{\top}(CPC^{\top} + \Sigma_{v})^{-1}$ minimizes the steady state covariance matrix 
\begin{align}
P &:= \lim_{t \rightarrow \infty} P_t := \mathbf{E}[e_t e^{\top}_t].
\end{align}
Since $(A,C)$ is assumed to be detectable, the existence of $P$ is guaranteed and it can be found through the solution of an algebraic Ricatti equation. We define a residual sequence $r_t$ as the difference between the actual received output $\Bar{y}_t$ and the predicted output $C \hat{x}_t$ as, 
\begin{align} \label{eqn_r}
    r_{t} &= \bar{y}_t - C \hat{x}_t = C e_t + v_t + \delta_t,
\end{align}
and in the attack free setting, $r_t$ falls according to a zero mean distribution with covariance 
\begin{align} \label{eqn_residual}
    \Sigma_{r} = \mathbf{E}[r_t r^{\top}_t] = C P C^{\top} + \Sigma_{v}.
\end{align}
Since \eqref{eqn_r} is linear, it is possible to obtain the fixed and first $k$ moments of the random variable $r_{t}$ by propagating the corresponding moments of the primitive random variables $w_t, v_t$. Thus, the distribution $P_r$ of $r_t$ (not necessarily Gaussian) belongs to an ambiguity set $\mathcal{P}^{r}_{k}$ given by 
\begin{align} \label{eqn_ambig_set_r}
    \mathcal{P}^{r}_{k} = \{ P_{r} \mid \mathbb{E}[r^{k_1}_{t_1} r^{k_2}_{t_2} \dots r^{k_p}_{t_p} ] = M^{\kappa}_{r}, \ \kappa \in J_k \}.
\end{align} 
We define detection measure $q_t$ as a quadratic function of $r_t$
\begin{align} \label{eqn_q}
    q_t = r^{\top}_t \Sigma^{-1}_{r} r_t,
\end{align}
which will be compared to a threshold for anomaly detection.

\subsection{Data-Driven Moment Estimation from Residual Data}
An alternative to obtaining residual moments by propagating the moments of primitive random variables through the system model is to instead collect residual data $r_t$ (from attack-free operation) and estimate residual moments or the moments of $q_t$ directly from the data. Such a data-driven approach allows our proposed tuning approaches to be used in much broader settings where it is difficult to propagate moments through a model (or even to obtain a model), but where residual data is easily generated from sensors and a state estimator. Higher-order moments require increasingly more data to obtain accurate estimates. Determining the required amount of data is possible using finite-sample measure concentration results given as in  \cite{shawe2003estimating}, but we leave such an analysis for future work. The proposed methods for setting thresholds for anomaly detection can be used together with ambiguity sets built upon data-driven residual estimates, although the false alarm rates will also be affected by sampling errors. Moment estimation uncertainty could be accommodated using the same generalized moment problem computations we propose here, but with assumed \emph{bounds} on moment estimates rather than having them fixed to exact known values. Such problems can also be reformulated in a computationally tractable manner, as described in, e.g., Section 3 of \cite{delage2010distributionally}, which incorporate uncertainty sets for estimated moments. These formulations would simply impose further constraints on our primal problem defined in subsection III.B of our revised manuscript. It is possible and would be interesting to explore how to use statistical confidence intervals for data-driven moment estimates to inform bounds used in the SDP and obtain end-to-end guarantees on false alarm rates associated with certain thresholds. This will be pursued in future work. 

In either setting, the \textit{feasible} first $k$ moment sequence of the distribution of $q_t$ denoted by $M^{k}_{q}$ is assumed to be known either from its primitive variables through the model-based approach or estimated from data. Then, the $k$-moments-based ambiguity set of the scalar random variable $q_t$ is defined as
\begin{align} \label{eqn_ambig_set_q_high}
    \mathcal{P}^{q}_{k} = \{ P_{q} \mid  \mathbb{E}[q^{k}_{t}] = M^{k}_{q} \}.
\end{align}
\begin{proposition} \label{def_rs_reach}
\cite{bertsimas_popescu} Consider a univariate random variable $X$ defined on $\Omega=\mathbb{R}_{+}$, endowed with its Borel sigma algebra of events. A sequence $\bar{\sigma} = (M_1, M_2, \dots, M_k)^{\top}$ is a \textbf{feasible} $(1,k,\Omega)-$moment vector of the random variable $X$, if and only if the matrices $R_k \succeq 0$ and $R_{k-1} \succeq 0$, where for any integer $l \geq 0$, the matrices are defined as
\begin{align} \label{eqn_mom_feas_cdtn}
\text {\scriptsize
    $R_{2l} = \begin{bmatrix} 1 & M_1 & \dots & M_l \\ M_1 & M_2 & \dots & M_{l+1} \\ \vdots & \vdots & \ddots & \vdots \\ M_l & M_{l+1} & \dots & M_{2l} \end{bmatrix}, R_{2l+1} = \begin{bmatrix} M_1 & \dots & M_{l+1} \\ M_2 & \dots & M_{l+2} \\ \vdots & \ddots & \vdots \\ M_{l+1} & \dots & M_{2l+1}
    \end{bmatrix}.$}
\end{align}
\end{proposition}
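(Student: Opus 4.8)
The plan is to recognize this statement as the solvability criterion for the \emph{truncated Stieltjes moment problem} on the half-line and to prove the two implications separately. Writing $M_0 = 1$ and associating to any coefficient vector $a = (a_0, \dots, a_l)^{\top}$ the polynomial $p(x) = \sum_{i=0}^{l} a_i x^i$, the entire argument revolves around rewriting the quadratic forms $a^{\top} R_{2l}\, a$ and $a^{\top} R_{2l+1}\, a$ as expectations of $p(X)^2$ and $X\,p(X)^2$, respectively.

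For the \textbf{necessity} direction, suppose a probability measure $P$ on $\Omega = \mathbb{R}_{+}$ with $M_i = \mathbb{E}[X^i]$ exists. Expanding the Hankel quadratic form gives
\begin{align}
a^{\top} R_{2l}\, a = \sum_{i,j=0}^{l} a_i a_j M_{i+j} = \mathbb{E}\!\left[ p(X)^2 \right] \geq 0,
\end{align}
so $R_{2l} \succeq 0$. For the shifted matrix the same expansion yields $a^{\top} R_{2l+1}\, a = \mathbb{E}[X\, p(X)^2] \geq 0$, where non-negativity is now forced by the support restriction $X \geq 0$ that defines $\Omega = \mathbb{R}_{+}$. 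Since either parity of $k$ produces one matrix of each type among $\{R_k, R_{k-1}\}$, both are positive semidefinite. This direction is essentially a bookkeeping computation.

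For the \textbf{sufficiency} direction -- the substantive part -- I would assume $R_k \succeq 0$ and $R_{k-1} \succeq 0$ and construct a representing measure supported on $\mathbb{R}_{+}$. The natural device is the Riesz functional $L$ on polynomials of degree at most $k$ defined by $L(x^i) = M_i$; the two semidefiniteness conditions say precisely that $L(p^2) \geq 0$ and $L(x\, p^2) \geq 0$ for all admissible $p$, i.e. $L$ is non-negative on sums of squares and on $x$ times sums of squares. The construction then proceeds through the orthogonal polynomials generated by $L$: the roots of an appropriately chosen orthogonal polynomial serve as support points with positive (Christoffel) weights, and the shifted condition $R_{k-1} \succeq 0$ is exactly what guarantees these nodes lie in $[0,\infty)$ rather than spreading onto the negative axis. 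This yields a finitely supported (Gaussian-quadrature) measure matching $M_1,\dots,M_k$.

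I expect the sufficiency to be the main obstacle, and within it two technical points deserve care: handling the \emph{degenerate} case in which the Hankel matrix is singular, where a flat-extension argument in the sense of Curto--Fialkow replaces the generic orthogonal-polynomial construction and pins down a unique minimal measure, and verifying that the quadrature nodes are indeed non-negative. Both are standard in classical moment theory, so in the interest of brevity I would invoke the result in its cited form \cite{bertsimas_popescu} rather than reproducing the full construction, emphasizing that the pair of conditions $R_k, R_{k-1} \succeq 0$ simultaneously encodes moment-consistency and the half-line support constraint.
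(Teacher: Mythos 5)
The paper itself gives no proof of this proposition at all: it is stated purely as a citation to \cite{bertsimas_popescu}, with no argument following it. Your attempt therefore does strictly more than the paper does. Your necessity direction is complete and correct: identifying $a^{\top}R_{2l}\,a = \sum_{i,j}a_i a_j M_{i+j} = \mathbb{E}[p(X)^2] \geq 0$ and $a^{\top}R_{2l+1}\,a = \mathbb{E}[X\,p(X)^2] \geq 0$, with the second inequality using the support $\Omega = \mathbb{R}_{+}$, is exactly the standard argument, and your parity observation correctly explains why both $R_k$ and $R_{k-1}$ are covered. Your sufficiency roadmap (Riesz functional nonnegative on squares and on $x$ times squares, orthogonal polynomials, quadrature nodes forced onto $[0,\infty)$ by the shifted Hankel condition, Curto--Fialkow flat extensions in the singular case) names the right machinery, and since you ultimately fall back on the citation for the details, your proof is, in effect, the paper's "proof" plus a genuine verification of one direction.

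One substantive caution, which your own remark about the degenerate case gestures at but understates: when the Hankel matrices are singular, positive semidefiniteness alone is \emph{not} sufficient, so the "if and only if" as stated cannot be proved without qualification. Concretely, take $(M_1, M_2, M_3, M_4) = (1,1,1,2)$. Then $R_3 = \bigl[\begin{smallmatrix}1 & 1\\ 1 & 1\end{smallmatrix}\bigr] \succeq 0$ and $R_4 \succeq 0$ (its quadratic form is $(x_0+x_1+x_2)^2 + x_2^2$), yet $M_1 = M_2 = 1$ forces $\mathrm{Var}(X) = 0$, hence $X = 1$ almost surely and $M_4 = 1 \neq 2$: no representing measure exists. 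The correct statement is that the semidefiniteness conditions characterize the \emph{closure} of the set of feasible moment vectors (equivalently, the biconditional holds on the interior, or with additional Curto--Fialkow rank conditions in the boundary case). This is consistent with how the result is actually used later in the paper, where strong duality for the bound \eqref{eqn_1komega_dual} is invoked only under the assumption that $\bar{\sigma}_k$ is an interior point of $\mathcal{M}_k$. So a fully rigorous write-up of your sufficiency direction must either assume strict positive definiteness or weaken the conclusion; as stated, the degenerate case is not a technicality to be handled but a place where the claim fails.
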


\section{Design of Anomaly Detector Thresholds} \label{sec_dr_threshold_design}
Given a threshold \footnote{The first and second subscripts in the threshold separated by a comma denote the random variable and number of moments respectively.}$\alpha_{q,k} > 0$ and the distance measure $q_t$, alarm time(s) $t^{\star}$ are produced according to the following rules,

\begin{align} \label{eqn_detectorthresh}
    \begin{cases} q_t \leq \alpha_{q,k}, &\text{no alarm} \\
    q_t > \alpha_{q,k}, &\text{alarm: } t^{\star} = t.
    \end{cases}
\end{align}
Even in the absence of attacks, the detector is expected to generate false alarms due to the infinite support of $v_t$, because some values drawn from $P_{q}$ will exceed the threshold $\alpha_{q,k}$. If $P_{q}$ is known, then it is possible to extract an optimum threshold value $\alpha_{q}^*$ from the corresponding cumulative distribution function $F_{q}$ for a desired false alarm rate, $\mathcal{A}$. For example, if $r_t$ is Gaussian, $q_t$ would be a chi-squared random variable, and the optimum threshold $\alpha^{\star}_{q}$ corresponding to the desired false alarm rate $\mathcal{A} = \mathcal{A}^{\star}$ is then 
\begin{equation} \label{eqn_chi_squared_threshold}
\alpha^{\star}_{q} = \alpha_{\chi^{2}} := 2 \mathbb{P}^{-1}\left(1-\mathcal{A}^{\star}, \frac{p}{2}\right),
\end{equation}
where $\mathbb{P}^{-1}(\cdot, \cdot)$ denotes the inverse regularized lower incomplete gamma function \cite{murguia2019model}. When the complete distribution is not available, tuning methods using \eqref{eqn_chi_squared_threshold} may design thresholds that generate actual false alarm rates significantly higher than what is desired. With the distributionally robust approach, we aim to achieve a false alarm rate less than $\mathcal{A}$, and the detector threshold $\alpha^{\star}_{r,k}$ is selected such that
\begin{align}  \label{eqn_r_ineq}
\sup_{P_{r} \in \mathcal{P}^{r}_{k}} P_{r}(r^{\top}_{t} \Sigma^{-1}_{r} r_{t} \leq \alpha^{\star}_{r,k}) &= 1 - \mathcal{A}.
\end{align}
When the first two moments $(k = 2)$ of $r_t$ are known, the following proposition using the generalized Chebyshev inequality explained in \cite{boyd_sdp,chen_chebyshev} can be used to obtain the worst case detector threshold $\alpha^{\star}_{r,2}$ satisfying \eqref{eqn_r_ineq}.
\begin{proposition} \label{prop_dr_alfa}
\cite{renganathan2020distributionally} Given a desired false alarm rate $\mathcal{A}$ and $r_t \sim P_{r} \in \mathcal{P}^{r}_{k}$ with $k = 2$, the optimal distributionally robust threshold $\alpha^{\star}_{r,2}$ satisfying \eqref{eqn_r_ineq} is
\begin{align} \label{eqn_DR_alpha}
    \alpha^{\star}_{r,2} = \frac{p}{\mathcal{A}}.
\end{align}
\end{proposition}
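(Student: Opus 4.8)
The plan is to recognize the quantity appearing in \eqref{eqn_r_ineq} as a generalized (multivariate) Chebyshev bound and to exploit the fact that, although $P_r$ is only constrained to lie in the ambiguity set $\mathcal{P}^{r}_{2}$, the expectation of the detection measure $q_t = r_t^\top \Sigma_r^{-1} r_t$ is pinned down by the first two moments alone. First I would reformulate \eqref{eqn_r_ineq} as the requirement that the worst-case false-alarm probability satisfy $\sup_{P_r \in \mathcal{P}^{r}_{2}} P_r(q_t > \alpha^{\star}_{r,2}) = \mathcal{A}$, so that the core task reduces to evaluating this worst-case exceedance probability of the scalar $q_t$ over all zero-mean distributions with covariance $\Sigma_r$.

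Next I would compute the common expectation of $q_t$ across the entire set. Using linearity and the cyclic property of the trace, $\mathbb{E}[q_t] = \mathbb{E}[\trace(\Sigma_r^{-1} r_t r_t^\top)] = \trace(\Sigma_r^{-1}\, \mathbb{E}[r_t r_t^\top]) = \trace(\Sigma_r^{-1}\Sigma_r) = \trace(I_p) = p$, which is identical for every $P_r \in \mathcal{P}^{r}_{2}$ because it depends only on the fixed mean and covariance. Since $q_t \geq 0$, Markov's inequality then yields $P_r(q_t > \alpha) \leq \mathbb{E}[q_t]/\alpha = p/\alpha$ uniformly over the ambiguity set, establishing the $\leq$ direction for the worst-case probability.

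The step I expect to be the main obstacle is proving that this Markov/Chebyshev bound is \emph{tight} over $\mathcal{P}^{r}_{2}$, i.e., that the supremum is genuinely attained (or approached) so that equating $p/\alpha = \mathcal{A}$ is legitimate rather than merely sufficient. In the regime of interest ($\mathcal{A} \leq 1$, hence $\alpha \geq p$) I would exhibit an explicit extremal distribution that matches the prescribed first two moments while placing exactly the allowed mass $p/\alpha$ on the boundary ellipsoid $\{r : r^\top \Sigma_r^{-1} r = \alpha\}$ and the remainder at the origin; verifying that such a (possibly discrete) distribution is zero-mean with covariance $\Sigma_r$ yet achieves $P_r(q_t > \alpha) = p/\alpha$ closes the $\geq$ direction. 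Alternatively, this attainment follows from strong duality of the moment-problem semidefinite program invoked in \cite{bertsimas_popescu, boyd_sdp, chen_chebyshev}. Once tightness is in hand, solving $p/\alpha^{\star}_{r,2} = \mathcal{A}$ gives $\alpha^{\star}_{r,2} = p/\mathcal{A}$, completing the proof.
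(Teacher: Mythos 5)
Your argument is correct, but it takes a genuinely different route from the paper, which in fact offers no proof of Proposition \ref{prop_dr_alfa} at all: the result is imported from \cite{renganathan2020distributionally} and attributed to the generalized multivariate Chebyshev inequality of \cite{boyd_sdp,chen_chebyshev}, i.e., to SDP-duality machinery for worst-case probabilities of a random \emph{vector} with fixed mean and covariance landing in a given set. Your proof instead collapses the multivariate problem to a univariate one: because the alarm region is precisely a superlevel set of the scalar $q_t = r_t^\top \Sigma_r^{-1} r_t$, and because $\mathbb{E}[q_t] = \trace\bigl(\Sigma_r^{-1}\mathbb{E}[r_t r_t^\top]\bigr) = \trace(I_p) = p$ is pinned down by the two fixed moments, Markov's inequality plus an extremal two-point-per-axis construction settles the question. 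This is more elementary and self-contained than the citation route, and it makes transparent a coincidence the paper only observes numerically in Section \ref{sec_numerical_simulation}, namely $\alpha^{\star}_{r,2} = \alpha^{\star}_{q,1} = p/\mathcal{A}$: the sharp two-moment bound on the vector $r_t$ is exactly the one-moment Markov bound on the scalar $q_t$. What the general Chebyshev/SDP machinery buys in exchange is applicability to arbitrary convex alarm sets, not just the covariance ellipsoid; your reduction works only because the detection statistic and the second-moment constraint involve the same quadratic form. (Your reading of \eqref{eqn_r_ineq} as $\sup_{P_r} P_r(q_t > \alpha) = \mathcal{A}$ is the correct interpretation; taken literally with $\sup$ on the non-alarm event the equation is vacuous, which is a typo in the paper.)

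One technical wrinkle you should repair: since the alarm rule \eqref{eqn_detectorthresh} is the \emph{strict} event $q_t > \alpha$, a distribution placing mass $p/\alpha$ exactly on the boundary ellipsoid $\{r : r^\top \Sigma_r^{-1} r = \alpha\}$ achieves $P_r(q_t > \alpha) = 0$, not $p/\alpha$, so it is not extremal for your event. The fix is standard: in whitened coordinates $z = \Sigma_r^{-1/2} r_t$, for $c > \alpha$ place mass $1/(2c)$ at each of $\pm\sqrt{c}\,e_i$, $i = 1,\dots,p$, and mass $1 - p/c$ at the origin (valid since $c > \alpha \geq p$ in the regime $\mathcal{A} \leq 1$); this has zero mean, identity covariance, and $P(q_t > \alpha) = p/c$. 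Letting $c \downarrow \alpha$ shows the supremum equals $p/\alpha$ even though it is approached rather than attained, which is all that is needed to equate $p/\alpha^{\star}_{r,2} = \mathcal{A}$. Your hedge ``attained (or approached)'' and the fallback to strong duality in \cite{bertsimas_popescu} show you were aware of this, so it is a slip in the construction rather than a gap in the idea.
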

\subsection{Improved Detector Threshold With Higher-Order Moments} \label{subsec_k_moments}
It is possible to obtain a sharpened detector threshold than $\alpha^{\star}_{r,2}$, if higher-order moments are taken into account. For example, the skewness and kurtosis parameters convey asymmetry and heaviness of tails of the distribution, respectively. We can leverage such information about the true but unknown distribution revealed by the higher-order moments to tighten the required probability bound and thereby obtain an improved detector threshold. It is possible to use $r_{t}$ with higher order moments (first $k$ moments) to obtain a sharpened detector threshold $\alpha^{\star}_{r,k}$. However, it was shown in \cite{bertsimas_popescu} that for $r_t \in \mathbb{R}^{p}$ with support $\Omega = \mathbb{R}^{p}$ and $k \geq 4$, it is NP-hard to find tight bounds for the corresponding moment bound problem with rational problem data. On the other hand, they provide a semidefinite optimization problem in $k+1$ dimension for the case of a univariate random variable with $k$ moments. Hence, rather looking for higher-order moments of $r_{t}$, instead we look for higher-order moments of scalar random variable $q_{t}$. Subsequently, we use a bisection algorithm to obtain a sharpened detection threshold $\alpha^{\star}_{q,k}$ for a given $\mathcal{A}$.
\subsection{Estimating Probability Using $(1,k,\Omega)-$Moment Bound}
\begin{figure}[ht]
    \centering
    \includegraphics[scale=0.35]{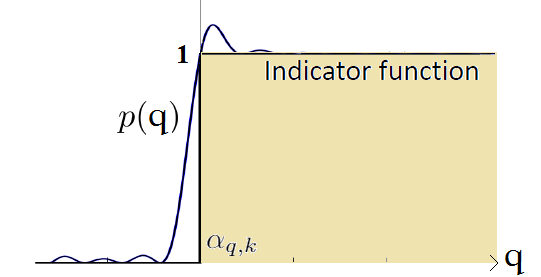}
    \caption{A moment based polynomial bounding the indicator function $\mathbf{1}_{\mathcal{S}_{k}}$ representing the set $\mathcal{S}_{k} = \mathbb{R}_{> \alpha_{q,k}}$ is shown here.}
    \label{fig_poly}
\end{figure}
Given the first $k$ moments of random variable $q_t$ with support $\Omega = \mathbb{R}_{\geq 0}$ and the set $\mathcal{S}_{k} = \mathbb{R}_{> \alpha_{q,k}}$ representing an alarm event as shown in Figure \ref{fig_poly}, the infinite dimensional $(1,k,\Omega)-$moment bound primal problem is given by
\begin{equation} \label{eqn_1komega_primal}
    \begin{aligned}
        &\sup_{P_{q} \in \mathcal{P}^{q}_{k}} & & P_{q}(q_{t} \in \mathcal{S}_{k})\\
        &\text{s.t.} & &\int_{\Omega} q^{r} d\mu = M^{r}_{q}, \quad r = 0,1,\dots,k.
    \end{aligned}
\end{equation}
Since \eqref{eqn_1komega_primal} is infinite dimensional, it is difficult to solve efficiently. However, we can use the linear programming duality theory to associate a dual variable $y_{r}, r = 0,1,\dots,k$ with each equality constraint of \eqref{eqn_1komega_primal} to get the corresponding dual problem
\begin{equation} \label{eqn_1komega_dual}
    \begin{aligned}
        &\text{min} & & \sum^{k}_{r = 0}y_r M^{r}_q\\
        &\text{s.t.} & &p(q) = \sum^{k}_{r = 0}y_r q^r \geq 1, \quad \forall q \in \mathcal{S}_{k},\\ 
        & & &p(q) = \sum^{k}_{r = 0}y_r q^r \geq 0, \quad \forall q \in \Omega.
    \end{aligned}
\end{equation}
The probability obtained as the solution to \eqref{eqn_1komega_dual} is an upper bound to the probability associated with \eqref{eqn_1komega_primal}. In general, if the moment vector $\bar{\sigma}_{k} = (M^{0}_q, M^{1}_q, \dots M^{k}_q)$ is an interior point of the set $\mathcal{M}_{k}$ of all feasible moment vectors, then strong duality exists between \eqref{eqn_1komega_primal} and \eqref{eqn_1komega_dual} enabling us to obtain a \emph{tight} bound on $P_{q}(q_{t} \in \mathcal{S}_{k})$. To achieve a desired false alarm rate $\mathcal{A}$, we can tune the threshold $\alpha_{q,k}$ defining the set $\mathcal{S}_{k}$ such that solution to \eqref{eqn_1komega_dual} is $\mathcal{A}$. The following lemma establishes the general trend observed between the values of the tuned threshold $\alpha_{q,k}$ for increasing values of $k$. 

\begin{lemma} \label{lemma_1}
Let $\mathcal{A}$ be the desired false alarm rate and $q_{t} \sim P_{q} \in \mathcal{P}^{q}_{k}$ as in \eqref{eqn_ambig_set_q_high}. Then, $\forall j,k \in \mathbb{Z}_{\geq 1}$ with $j < k$, we have 
\begin{align}
    \alpha^{\star}_{q,k} \leq \alpha^{\star}_{q,j}
\end{align} 
when the associated moments agree up to order $j$.
\end{lemma}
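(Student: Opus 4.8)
The plan is to establish the inequality $\alpha^{\star}_{q,k} \leq \alpha^{\star}_{q,j}$ by exploiting the nested structure of the dual feasible sets for the moment problems \eqref{eqn_1komega_dual} at orders $j$ and $k$. The key observation is that fixing more moments enlarges the set of polynomial certificates available to the dual minimizer, which can only decrease (or leave unchanged) the optimal probability bound for any fixed threshold. I would phrase everything in terms of the function that maps a threshold to the worst-case false alarm probability, and then argue monotonicity of its inverse.

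First I would define, for each order $\ell \in \{j, k\}$, the worst-case probability $\beta_{\ell}(\alpha) := \sup_{P_q \in \mathcal{P}^q_\ell} P_q(q_t > \alpha)$, which by strong duality equals the optimal value of \eqref{eqn_1komega_dual} with $k$ replaced by $\ell$. The tuned threshold $\alpha^{\star}_{q,\ell}$ is then defined implicitly by $\beta_{\ell}(\alpha^{\star}_{q,\ell}) = \mathcal{A}$. Second, I would show the pointwise inequality $\beta_k(\alpha) \leq \beta_j(\alpha)$ for every $\alpha > 0$. The cleanest route is the primal side: since the moments agree up to order $j$, any distribution consistent with all $k$ moment constraints is automatically consistent with the first $j$ constraints, so $\mathcal{P}^q_k \subseteq \mathcal{P}^q_j$. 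Taking a supremum over the smaller set $\mathcal{P}^q_k$ yields $\beta_k(\alpha) \leq \beta_j(\alpha)$. (Equivalently, on the dual side, a degree-$j$ certificate is a degree-$k$ certificate with the top coefficients set to zero, so the order-$k$ dual has a larger feasible region and hence a smaller minimum.) Third, I would invoke monotonicity: $\beta_\ell(\alpha)$ is non-increasing in $\alpha$, since the alarm set $\mathcal{S} = \mathbb{R}_{>\alpha}$ shrinks as $\alpha$ grows. Combining $\beta_k(\alpha^{\star}_{q,j}) \leq \beta_j(\alpha^{\star}_{q,j}) = \mathcal{A} = \beta_k(\alpha^{\star}_{q,k})$ with the fact that $\beta_k$ is non-increasing then forces $\alpha^{\star}_{q,k} \leq \alpha^{\star}_{q,j}$, which is the claim.

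The main obstacle will be handling the monotonicity argument rigorously when $\beta_k$ is not strictly decreasing or not continuous. If $\beta_k$ has flat stretches or jumps, the threshold $\alpha^{\star}_{q,k}$ satisfying $\beta_k(\alpha^{\star}_{q,k}) = \mathcal{A}$ may fail to be unique, and the inversion step must be stated carefully — for instance by taking $\alpha^{\star}_{q,\ell}$ to be the infimum of thresholds achieving the desired rate, and then checking that the comparison $\beta_k(\alpha^{\star}_{q,j}) \leq \mathcal{A}$ still propagates through the inf. I would also need the strong-duality hypothesis (the moment vector lies in the interior of the feasible moment set $\mathcal{M}_\ell$, as noted after \eqref{eqn_1komega_dual}) to ensure $\beta_\ell$ genuinely equals the dual optimum; I would state this as a standing assumption inherited from the surrounding discussion. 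A secondary subtlety is that the word "empirically" in the paper's own description of Lemma~\ref{lemma_1} hints the authors may only claim a general trend rather than a strict inequality in all degenerate cases, so I would be careful to present the $\leq$ conclusion as the honest consequence of the set-inclusion argument rather than overclaiming strict improvement.
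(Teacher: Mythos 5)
Your proposal is correct and follows essentially the same route as the paper's proof: the primal set inclusion $\mathcal{P}^{q}_{k} \subseteq \mathcal{P}^{q}_{j}$ when moments agree up to order $j$, evaluation of the worst-case alarm probability at the fixed threshold $\alpha^{\star}_{q,j}$ to get a bound of at most $\mathcal{A}$, and then monotonicity of the worst-case probability in the threshold to conclude $\alpha^{\star}_{q,k} \leq \alpha^{\star}_{q,j}$. Your additional care about flat stretches of $\beta_k$ and defining the tuned threshold as an infimum is a refinement the paper glosses over, but it does not change the underlying argument.
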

\begin{proof}
When the associated moments up to order $j<k$ agree, clearly we have $\mathcal{P}^{q}_{k} \subseteq \mathcal{P}^{q}_{j}$, since additional moment constraints restrict the set of distributions.
For a fixed threshold $\alpha^{\star}_{q,j}$ tuned for the false alarm rate $\mathcal{A}$, it follows that
\begin{align}
    \sup_{P_{q} \in \mathcal{P}^{q}_{k}} P_{q}(q_{t} > \alpha^{\star}_{q,j}) \leq \sup_{P_{q} \in \mathcal{P}^{q}_{j}} P_{q}(q_{t} > \alpha^{\star}_{q,j}) = \mathcal{A}.
\end{align}
Thus, to achieve a desired false alarm rate $\mathcal{A}$ under the constraint that $P_{q} \in \mathcal{P}^{q}_{k}$, the threshold $\alpha^{\star}_{q,k}$ must be non-increasing in $k$, satisfying $\alpha^{\star}_{q,k} \leq \alpha^{\star}_{q,j}$ for $j < k$.
\end{proof}

\begin{theorem} \label{thm_alfa_higher_moments}
Let $\epsilon > 0$, $k \in \mathbb{Z}_{\geq 2}$ and consider $q_t$ with its first $k$ moments of distribution $(M^1_q,M^2_q,\dots,M^k_q)$ (we let $M^0_q = 1$) defined on $\mathbb{R}_{+}$ being known (or estimated from the given data) and the associated anomaly detector threshold $\alpha_{q,k}$ introduced in \eqref{eqn_detectorthresh}, which is intended to achieve a desired false alarm rate $\mathcal{A}$. Suppose that $\alpha_{q,k}$ is obtained by solving the following bisection algorithm (knowing that the desired $\alpha^{\star}_{q,k} \in [\alpha_{l}, \alpha_{u}]$ and $\alpha_{u} = \alpha^{\star}_{q,k-1}$ by Lemma \ref{lemma_1})
\begin{algorithm} 
\caption{\texttt{Bisection} Subroutine}
\While{$\alpha_u - \alpha_l > \epsilon$}
{
$\alpha^{\star}_{q,k} \gets (\alpha_u + \alpha_l)/2$ \\
$p_{min} \gets $\text{ optimal value of} \eqref{eqn_sdp_higher_moments} with $\alpha_{q,k} = \alpha^{\star}_{q,k}$ \\
\eIf{$p_{min} > \mathcal{A}$} 
{$\alpha_l \gets \alpha^{\star}_{q,k}$}
{$\alpha_u \gets \alpha^{\star}_{q,k}$}
}
\label{alg_bisect}
\end{algorithm}
where the SDP that gives the tight upper bound on $P_{q}(q_t \geq \alpha_{q,k})$ in the third line is
\begin{equation} \label{eqn_sdp_higher_moments}
    \begin{aligned}
        &\text{min} & & \sum^{k}_{r = 0}y_r M^{r}_{q}\\
        &\text{s.t.} & & \sum_{i,j:i+j=2l-1} x_{ij} = 0, &l = 1,\dots,k, \\ 
        & & & (y_{0}-1) + \sum^{k}_{r = 1}y_r \alpha^{r}_{q,k} = x_{00}, \\
        & & & \sum^{k}_{r = l}y_r \binom{r}{l} \alpha^{r-l}_{q,k} = \sum_{i,j:i+j=2l} x_{ij}, &l = 1,\dots,k, \\
        & & & X \succeq 0, \\
        & & & \sum_{i,j:i+j=2l-1} z_{ij} = 0, &l = 1,\dots,k, \\ 
        & & & \sum^{l}_{r = 0}y_r \binom{k-r}{l-r} \alpha^{r-l}_{q,k} = \sum_{i,j:i+j=2l} z_{ij}, &l = 0,\dots,k, \\
        & & & Z \succeq 0. 
    \end{aligned}
\end{equation}
with variables $X, Z\in \mathbb{R}^{(k+1) \times (k+1)}, y_r, r = 0,1,\dots,k$. Then with this optimal detector threshold $\alpha^{\star}_{q,k}$, the false alarm rate under the worst-case distribution of the $q_t$ specified by the $k$-moments-based ambiguity set \eqref{eqn_ambig_set_q_high} is at most $\mathcal{A}$.
\end{theorem}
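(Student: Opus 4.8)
The plan is to prove the statement in two logically separate parts. First I would show that the semidefinite program \eqref{eqn_sdp_higher_moments} is an \emph{exact} reformulation of the dual moment-bound problem \eqref{eqn_1komega_dual}, so that its optimal value $p_{min}$ equals the tight worst-case tail probability $\Phi_k(\alpha_{q,k}) := \sup_{P_q \in \mathcal{P}^q_k} P_q(q_t \ge \alpha_{q,k})$. Second, I would show that the bisection in Algorithm \ref{alg_bisect}, driven by this value, returns a threshold at which $\Phi_k \le \mathcal{A}$. Since $\Phi_k(\alpha^\star_{q,k})$ is by construction the worst-case false alarm rate over the ambiguity set \eqref{eqn_ambig_set_q_high}, the second part is exactly the claim.

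For the first part, the key is that a univariate polynomial is nonnegative on a half-line or a closed interval if and only if it admits an exact sum-of-squares certificate, so the reformulation introduces \emph{no} relaxation gap. I would interpret the two positive semidefinite blocks as the two certificates required by \eqref{eqn_1komega_dual}. For the block involving $X$, I would substitute $q = \alpha_{q,k} + s^2$ into $p(q)-1$; because $s^2$ ranges over $[0,\infty)$, the requirement $p(q)\ge 1$ for all $q \ge \alpha_{q,k}$ is equivalent to $p(\alpha_{q,k}+s^2)-1$ being a nonnegative, hence SOS, polynomial in $s$, i.e. $p(\alpha_{q,k}+s^2)-1 = v(s)^\top X v(s)$ with $v(s)=(1,s,\dots,s^k)^\top$ and $X\succeq 0$. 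Expanding $(\alpha_{q,k}+s^2)^r$ by the binomial theorem produces the coefficients $\binom{r}{l}\alpha^{r-l}_{q,k}$, matching the $s^{2l}$ coefficient of $v(s)^\top X v(s)$, while the constraints $\sum_{i+j=2l-1}x_{ij}=0$ force the Gram form to contain only even powers, as it must. For the block involving $Z$, I would use the rational substitution $q = \alpha_{q,k}u/(1+u)$ that maps $u\in[0,\infty)$ bijectively onto $q\in[0,\alpha_{q,k})$; clearing the denominator, $p(q)\ge 0$ on $[0,\alpha_{q,k}]$ becomes nonnegativity on $[0,\infty)$ of $\sum_r y_r (\alpha_{q,k}u)^r (1+u)^{k-r}$, whose $u^l$ coefficient is $\sum_{r=0}^{l} y_r \binom{k-r}{l-r}\alpha^{r}_{q,k}$; applying the same $u=s^2$ device, together with the positive diagonal rescaling that accounts for the $\alpha^{-l}_{q,k}$ factors, yields exactly the $Z$ constraints. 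The two blocks jointly certify $p\ge 1$ on $(\alpha_{q,k},\infty)$ and $p\ge 0$ on all of $\mathbb{R}_{\ge 0}$, which are precisely the constraints of \eqref{eqn_1komega_dual}. Assuming the moment vector $\bar\sigma_k$ lies in the interior of $\mathcal{M}_k$, strong duality (as noted after \eqref{eqn_1komega_dual}) gives $p_{min}=\Phi_k(\alpha_{q,k})$.

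For the second part, I would first record that $\Phi_k(\alpha)$ is non-increasing in $\alpha$, since enlarging the threshold shrinks the tail event. The bisection maintains the invariant $\Phi_k(\alpha_u)\le \mathcal{A}$: it holds initially because $\alpha_u=\alpha^\star_{q,k-1}$ and, by the nesting $\mathcal{P}^q_k\subseteq\mathcal{P}^q_{k-1}$ exploited in Lemma \ref{lemma_1}, $\Phi_k(\alpha^\star_{q,k-1})\le \sup_{P_q\in\mathcal{P}^q_{k-1}}P_q(q_t\ge\alpha^\star_{q,k-1})=\mathcal{A}$. Each iteration evaluates $p_{min}=\Phi_k(\text{midpoint})$: if $p_{min}>\mathcal{A}$ the midpoint replaces $\alpha_l$ and the invariant on $\alpha_u$ is untouched, otherwise the midpoint, with $\Phi_k\le\mathcal{A}$, becomes the new $\alpha_u$ and the invariant is preserved. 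Hence upon termination the upper endpoint satisfies $\Phi_k(\alpha_u)\le\mathcal{A}$, and returning $\alpha^\star_{q,k}=\alpha_u$ gives a worst-case false alarm rate at most $\mathcal{A}$; monotonicity also confines $\alpha_u$ to within $\epsilon$ of the exact root of $\Phi_k(\alpha)=\mathcal{A}$, so the threshold is no more conservative than the tolerance allows.

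The step I expect to be the main obstacle is the exact, coefficient-level verification in the second paragraph: confirming that the binomial weights $\binom{r}{l}$ and $\binom{k-r}{l-r}$, the powers of $\alpha_{q,k}$, and the anti-diagonal-sum conditions in \eqref{eqn_sdp_higher_moments} really do reproduce the two SOS certificates \emph{without} any relaxation gap, and in particular that the $Z$-block substitution correctly handles the boundary points $0$ and $\alpha_{q,k}$ together with the open/closed distinction between $\mathcal{S}_k=\mathbb{R}_{>\alpha_{q,k}}$ and $\{q\ge\alpha_{q,k}\}$. Once the lossless equivalence of the two blocks with \eqref{eqn_1komega_dual} is pinned down, the remaining duality and bisection arguments are routine.
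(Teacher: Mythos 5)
Your proposal is correct and takes essentially the same route as the paper's proof: both pass to the dual moment problem \eqref{eqn_1komega_dual}, identify its two polynomial-nonnegativity constraints with the positive semidefinite blocks $X$ and $Z$ of \eqref{eqn_sdp_higher_moments}, invoke strong duality / tightness from Theorem 3.2 of \cite{bertsimas_popescu}, and let Algorithm \ref{alg_bisect} tune the threshold so that the worst-case rate over \eqref{eqn_ambig_set_q_high} is at most $\mathcal{A}$. The differences are only in level of detail: you construct the univariate SOS certificates explicitly (via the substitutions $q=\alpha_{q,k}+s^2$ and $q=\alpha_{q,k}u/(1+u)$ with a positive diagonal rescaling) where the paper simply cites \cite{bertsimas_popescu}, and your bisection analysis---maintaining the invariant that the upper endpoint $\alpha_u$ always satisfies the $\mathcal{A}$-bound and returning that endpoint---is in fact slightly more careful than the paper's one-line assertion of the tuning step, since the algorithm as written could terminate on a midpoint assigned to $\alpha_l$.
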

\begin{proof}
To get an upper bound on required probability in \eqref{eqn_1komega_primal}, it suffices to check for the polynomials defined in \eqref{eqn_1komega_dual} to be non-negative in their respective sets. That is, the polynomial $p(q)$ satisfies $p(q) - 1 \geq 0, \forall q \in \mathcal{S}_{k}$ if and only if there exists a positive semidefinite matrix $X \in \mathbb{R}^{(k+1) \times (k+1)}$ that satisfies the first four constraints defined in \eqref{eqn_sdp_higher_moments}. Similarly, the same polynomial $p(q)$ satisfies $p(q) \geq 0, \forall q \in [0, \alpha_{q,k}]$ if and only if there exists a positive semidefinite matrix $Z \in \mathbb{R}^{(k+1) \times (k+1)}$ that satisfies the last three constraints defined in \eqref{eqn_sdp_higher_moments}. Then using the results given in Theorem 3.2 of \cite{bertsimas_popescu}, for a fixed threshold $\alpha_{q,k}$, the optimal value of the semidefinite program in \eqref{eqn_sdp_higher_moments} yields the worst-case false alarm rate generated by the worst-case $P_{q} \in \mathcal{P}^{q}_{k}$ in \eqref{eqn_ambig_set_q_high}. To ensure that the optimal value of \eqref{eqn_sdp_higher_moments} is equal to the desired false alarm rate $\mathcal{A}$, it suffices to tune the threshold $\alpha_{q,k}$ using the bisection algorithm \ref{alg_bisect}, to get $\alpha^{\star}_{q,k}$.
\end{proof}

\subsection{Discussion}
For $k = 1,2,3$, the solution to  \eqref{eqn_sdp_higher_moments} can be obtained in closed-form using Theorem 3.3 of \cite{bertsimas_popescu} with the corresponding moments data without explicitly solving the SDP in \eqref{eqn_sdp_higher_moments}. For $k = 1,2$, we recover the Markov bound and a strictly improved Chebyshev bound respectively as the solution of \eqref{eqn_sdp_higher_moments}. Assuming that $\hat{\delta}_{k} > 0, \alpha^{\star}_{q,k} = (1+\hat{\delta}_{k}) M^{1}_{q}$, the resulting closed form solutions for the optimal detector threshold $\alpha^{\star}_{q,k}$ for a desired false alarm rate $\mathcal{A}$ with $k = 1,2$ are summarized in Table \ref{tab_alfas} with $C^{2}_{M} = \frac{M^{2}_{q} - (M^{1}_{q})^{2}}{(M^{1}_{q})^{2}}$ and these values can be used as $\alpha_{u}$ towards computing $\alpha^{\star}_{q,k}$ for any $k >2$. We omit the expression for the threshold $\alpha^{\star}_{q,3}$ for the sake of brevity. 
\begin{table}[ht]
    \centering
    \begin{tabular}{|| c | c ||} 
 \hline
 $k$ & $\alpha^{\star}_{q,k}$ value\\ [0.5ex]
 \hline\hline
 1 & $\frac{M^{1}_{q}}{\mathcal{A}}$ \\ \hline
 2 & $\left(1 + \sqrt{\frac{1 - \mathcal{A}}{\mathcal{A}}} \, C_{M} \right)M^{1}_{q}$ \\ [1ex] 
 \hline
 \end{tabular}
    \caption{Closed-form solutions for $\alpha^{\star}_{q,k}$ with $k = 1,2.$}
    \label{tab_alfas}
\end{table}
Further, the optimal threshold $\alpha^{\star}_{q,k}$ is tight for a given $\mathcal{P}^{q}_{k}$ up to a tolerance $\epsilon > 0$ specified by the bisection in Algorithm \ref{alg_bisect}. 

The exact detector threshold (which we call $\alpha^{\star}_{q}$) corresponding to $P_{q}$ is obtained only asymptotically (or equivalently when true $P_{q}$ is known exactly). That is,
    \begin{align}
        \alpha^{\star}_{q} = \begin{cases} \lim_{k \rightarrow \infty} \alpha^{\star}_{q,k}, &\text{if all moments are known}, \\
        F^{-1}_{q}(\mathcal{A}), &\text{if true $P_{q}, F_{q}$ are known}.
        \end{cases}
    \end{align}
However, while determining how many moments are needed to get a close approximation of the exact threshold is difficult in general, we find that significant improvements can be obtained from a small number of moments. Given $k$ moments of $q_t$, the complexity of this higher-order moment based approach involves solving the SDP given by \eqref{eqn_sdp_higher_moments} with variables $X, Z\in \mathbb{R}^{(k+1) \times (k+1)}, y_r, r = 0,1,\dots,k$. For large $k$, it is advisable to use, e.g., the Legendre polynomial basis instead of the standard polynomial basis for obtaining the moment-based polynomial in \eqref{eqn_sdp_higher_moments} as the former has nice orthogonal properties that improve numerical stability.


\section{Attack-Reachable Set Bounds}\label{sec_reach_sets}
With the attacker assumed to have perfect knowledge of the system dynamics, the Kalman filter, control inputs,  measurements along with read and write access to all the sensors at each time step, we show that the volume of the attack-reachable set shrinks with the tightened threshold. We define a zero-alarm attack, which generates attack sequences so that no alarms are raised during attack. With $\Sigma^{\frac{1}{2}}_{r}$ being the symmetric square root of $\Sigma_{r}$ and an attack input $\bar{\delta}_t$ such that $\bar{\delta}^{\top}_t \bar{\delta}_t \leq \alpha$ (here $\alpha = \alpha^{\star}_{q,k}$), the attack sequence $\delta_t = - Ce_t - v_t + \Sigma^{\frac{1}{2}}_{r} \bar{\delta}_t$ guarantees that no alarm is raised. Then, using a static estimator feedback $u_t = K \hat{x}_t$ with $\bar{\delta}_t$, the evolution of the system dynamics with the joint state $\xi_t=[x_t, e_t]^{\top}$ with input $\zeta_t=[w_t, \bar{\delta}_t]^{\top}$ is studied. We define a reachable set of interest, driven by the ellipsoidally bounded inputs $w_t$ and $\bar{\delta}_t$, as
\begin{equation}
\mathcal{R}_x = \left\{  x_t=[I_n,\, 0_{n\times n}]\xi_t \ \left|\
    \begin{aligned}
        &\xi_{t+1}=\hat{A}\xi_{t}+ \hat{B} \zeta_t, \\
        &\xi_1=\mathbf{0},\ \bar{\delta}^{\top}_t \bar{\delta}_t \leq \alpha,\\
        &w_t^\top\Sigma_w^{-1}w_t \leq \bar{w},\ \forall t\in\mathbb{N}
    \end{aligned}
    \right. 
    \right\},
\end{equation}
where the noise threshold $\bar{w}$ obtained using \eqref{eqn_DR_alpha} satisfies,  
\begin{equation}\label{eqn_DRnoise_threshold}
    \sup_{P_{w} \in \mathcal{P}^{w}_{k}} P_{w}(w^{\top}_{t} \Sigma^{-1}_{w} w_{t} \leq \bar{w}) = 1 - \mathcal{A},
\end{equation}  
and $\hat{A} = \begin{bmatrix}A+BK & -BK \\ 0 & A\end{bmatrix}, \hat{B} = \begin{bmatrix}I & 0\\I & -L\Sigma_r^{1/2}\end{bmatrix}$. 
Using the geometric approach presented in \cite{hashemi2020codesign} with $H_i = A_{cl}^{i} - A^{i}$, the reachable set of states is the Minkowski sum of the following ellipsoidal bound
\begin{align}\label{eqn_reach_minkowski}
\text {\footnotesize 
$\mathcal{R}_{x,t}(\bar{w}, \alpha)  =\bigoplus_{i=0}^{t-2} \mathcal{E}\left(\bar{w} A^i \Sigma_w {A^{i}}^{\top}\right) \oplus \mathcal{E}\left(\alpha H_i L\Sigma_r L^{\top} H_{i}^{\top}\right).$}
\end{align}
Specifically, theorem 1 of \cite{hashemi2020codesign} provides us the exact boundary of Minkowski sum in \eqref{eqn_reach_minkowski} using an analytical formula. The following corollary highlights the effects of the tightened threshold on the size of the reachable set.
\begin{corollary} \label{cor_reach_shrinks}
The volume of the reachable set $\mathcal{R}_{x,t}(\bar{w}, \alpha)$ shrinks with the tightened threshold $\alpha^{\star}_{q,k}$ for all $k \geq 1$. Further, $\forall k_1, k_2 \geq 1$ and $k_1 < k_2$, the following inclusion holds $\mathcal{R}_{x,t}(\bar{w}, \alpha^{\star}_{q,k_{1}}) \supseteq \mathcal{R}_{x,t}(\bar{w}, \alpha^{\star}_{q,k_{2}})$.
\end{corollary}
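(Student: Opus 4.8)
The plan is to exploit the explicit Minkowski-sum representation \eqref{eqn_reach_minkowski} and reduce the claim to three elementary monotonicity facts: monotonicity of the threshold in $k$ (already supplied by Lemma \ref{lemma_1}), monotonicity of each $\alpha$-scaled ellipsoid in $\alpha$, and monotonicity of both the Minkowski sum and the Lebesgue volume under set inclusion. I would observe first that in \eqref{eqn_reach_minkowski} the noise ellipsoids $\mathcal{E}(\bar{w} A^i \Sigma_w {A^i}^{\top})$ depend only on the noise threshold $\bar{w}$, which is held fixed as the first argument of $\mathcal{R}_{x,t}$; hence they are identical across all choices of $\alpha$, and only the attack ellipsoids $\mathcal{E}(\alpha H_i L \Sigma_r L^{\top} H_i^{\top})$ carry the dependence on $\alpha$.

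First I would record the scaling behaviour of the attack ellipsoids. Writing $Q_i = H_i L \Sigma_r L^{\top} H_i^{\top} \succeq 0$ and using the shape-matrix parameterization $\mathcal{E}(Q) = \{ Q^{1/2} u : \|u\|_2 \leq 1\}$, one has the identity $\mathcal{E}(\alpha Q_i) = \sqrt{\alpha}\, \mathcal{E}(Q_i)$. Consequently, for $0 < \alpha_a \leq \alpha_b$ we obtain $\mathcal{E}(\alpha_a Q_i) \subseteq \mathcal{E}(\alpha_b Q_i)$; equivalently, $\alpha_a Q_i \preceq \alpha_b Q_i$ in the positive-semidefinite order yields the same conclusion. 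I would use the image parameterization rather than the $Q^{-1}$-sublevel-set form precisely because the $Q_i$ may be rank-deficient (the maps $H_i$ and $L$ need not be invertible), so the ellipsoids can be degenerate; the scaling identity $\mathcal{E}(\alpha Q_i) = \sqrt{\alpha}\, \mathcal{E}(Q_i)$ remains valid in that degenerate case, which is the point that needs the most care.

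Next I would combine this with the inclusion-monotonicity of the Minkowski sum: if $S_i \subseteq T_i$ for every $i$, then $\bigoplus_i S_i \subseteq \bigoplus_i T_i$. Taking $k_1 < k_2$, Lemma \ref{lemma_1} gives $\alpha^{\star}_{q,k_2} \leq \alpha^{\star}_{q,k_1}$, so by the previous step each attack ellipsoid satisfies $\mathcal{E}(\alpha^{\star}_{q,k_2} Q_i) \subseteq \mathcal{E}(\alpha^{\star}_{q,k_1} Q_i)$, while the noise ellipsoids coincide. Summing term by term over $i = 0, \dots, t-2$ then yields the desired inclusion $\mathcal{R}_{x,t}(\bar{w}, \alpha^{\star}_{q,k_2}) \subseteq \mathcal{R}_{x,t}(\bar{w}, \alpha^{\star}_{q,k_1})$.

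Finally, since Lebesgue volume is monotone under set inclusion, the chain of inclusions immediately gives $\mathrm{vol}\big(\mathcal{R}_{x,t}(\bar{w}, \alpha^{\star}_{q,k_2})\big) \leq \mathrm{vol}\big(\mathcal{R}_{x,t}(\bar{w}, \alpha^{\star}_{q,k_1})\big)$, establishing that the reachable volume is non-increasing as the threshold is tightened by incorporating more moments. I would close with the remark that, because the bound of Lemma \ref{lemma_1} is non-strict, ``shrinks'' should be read as non-increasing, with strict volume reduction occurring whenever $\alpha^{\star}_{q,k_2} < \alpha^{\star}_{q,k_1}$ strictly. The only genuine subtlety in the argument is the degeneracy issue handled in the second paragraph; every other ingredient is a standard monotonicity property.
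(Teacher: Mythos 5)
Your proposal is correct and follows essentially the same route as the paper's proof: invoke Lemma \ref{lemma_1} for $\alpha^{\star}_{q,k_2} \leq \alpha^{\star}_{q,k_1}$, then use the Minkowski-sum representation \eqref{eqn_reach_minkowski} to get term-by-term ellipsoid inclusion, hence set inclusion and volume monotonicity. You simply make explicit the steps the paper compresses into two sentences (the $\sqrt{\alpha}$-scaling of possibly degenerate ellipsoids, Minkowski-sum monotonicity, and the observation that Lemma \ref{lemma_1} gives only a non-strict inequality, so ``shrinks'' means non-increasing), which is a fair tightening of the paper's terse argument rather than a different approach.
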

\begin{proof}
Apply \eqref{eqn_reach_minkowski} with $k_1, k_2 \geq 1$ \& $k_1 < k_2$ along with $\alpha^{\star}_{q,k_{1}} > \alpha^{\star}_{q,k_{2}}$ (by Lemma \ref{lemma_1}), to see that volume$(\mathcal{R}_{x,t}(\bar{w}, \alpha^{\star}_{q,k_{2}})) \leq \text{volume}(\mathcal{R}_{x,t}(\bar{w}, \alpha^{\star}_{q,k_{1}}))$. Inclusion follows from Minkowski sum with $\alpha^{\star}_{q,k_{1}} > \alpha^{\star}_{q,k_{2}}$.
\end{proof}


\section{Numerical Simulation} \label{sec_numerical_simulation}
We consider an empirical system under study with the detector tuned to a false alarm rate $\mathcal{A} = 0.05$ ($5$\%). We demonstrate here simulation results when the uncertainties are zero-mean Gaussian. We compare the size of the reachable set boundary computed using \eqref{eqn_reach_minkowski} for thresholds $\alpha^{\star}_{q,1}, \alpha^{\star}_{q,2}, \alpha^{\star}_{q,4}$. We assume that the modeler is unaware of the functional form of the uncertainties and has to arrive at a detector threshold satisfying the desired false alarm rate. The SDP in \eqref{eqn_sdp_higher_moments} was solved with $\epsilon = 10^{-4}$ using SOSToolbox in Matlab with SeDuMi solver.  
\begin{equation*}
    \begin{aligned}
        A &= \begin{bmatrix} 0.84 & 0.23 \\ -0.47 & 0.12 \end{bmatrix}, B = \begin{bmatrix} 0.07 & -0.32 \\ 0.23 & 0.58\end{bmatrix}, C = \begin{bmatrix} 1 & 0 \\ 2 & 1 \end{bmatrix} \\
        K &= \begin{bmatrix} 1.404 & -1.402 \\ 1.842 & 1.008 \end{bmatrix}, L = \begin{bmatrix} 0.0276 & 0.0448 \\ -0.01998 & -0.0290 \end{bmatrix}, \\
        \Sigma_v &= 6 I_p, \Sigma_w = \begin{bmatrix} 0.0225 & -0.0055 \\ -0.0055 & 0.0100 \end{bmatrix}.
    \end{aligned}
\end{equation*}  
\begin{figure}
    \centering
    \includegraphics[scale=0.18]{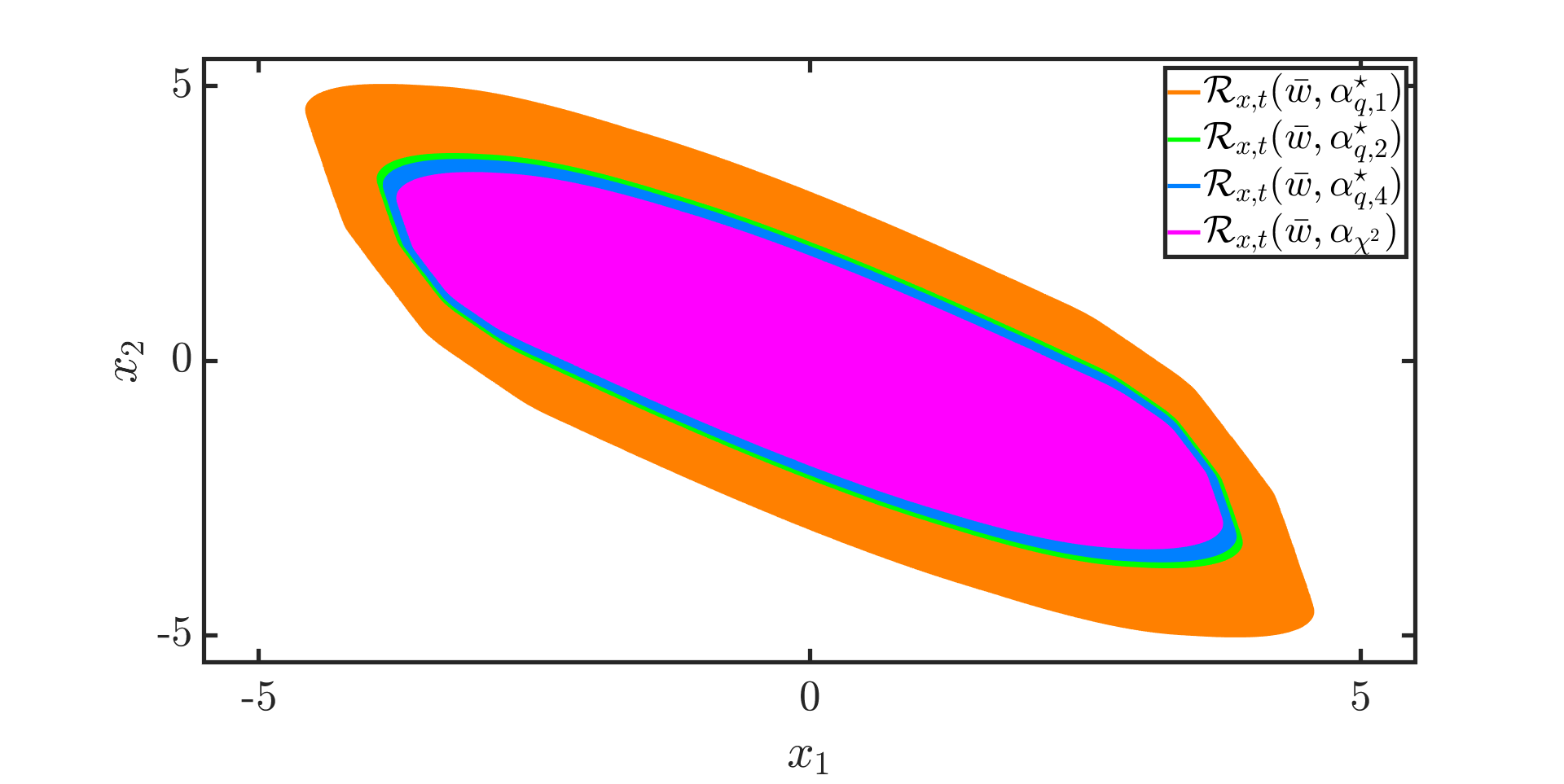}
    \caption{The reachable sets associated with the detector thresholds $\alpha^{\star}_{q,1}, \alpha^{\star}_{q,2}, \alpha^{\star}_{q,4}, \alpha_{\chi^{2}}$ are shown in orange, green, blue and magenta colors respectively. It is evident from the size of the reachable set that given a desired false alarm rate $\mathcal{A}$, the knowledge of higher-order moments results in a tightened detector threshold and thereby restricting the attacker's ability to launch a larger attack.}
    \label{fig_sim_normal_reach}
\end{figure}
\begin{figure}
    \centering
    \includegraphics[scale=0.18]{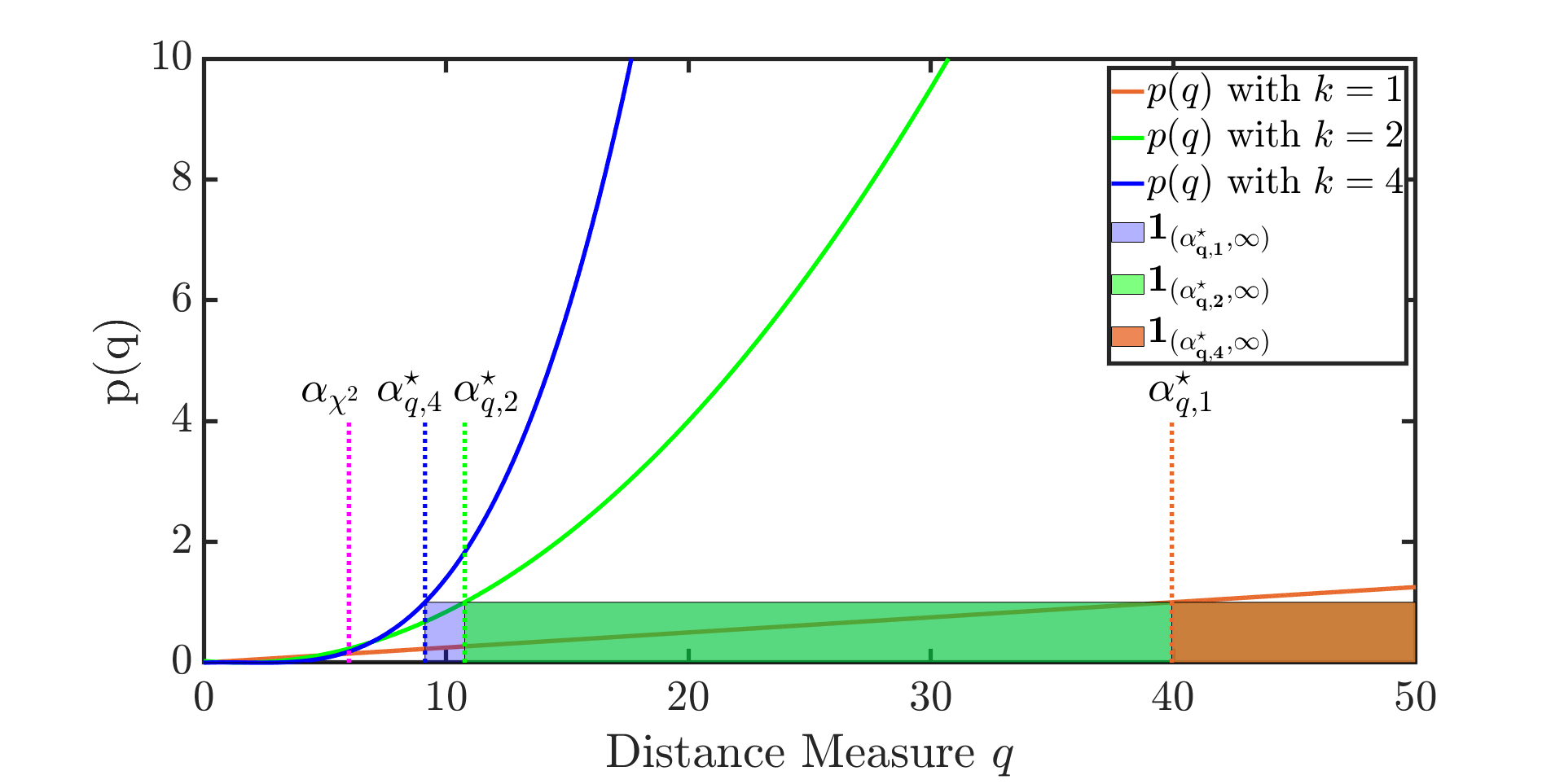}
    \caption{The moment based polynomials in orange, green and blue bounding their respective indicator functions  $\mathbf{1}_{(\alpha^{\star}_{q,1},\infty)}, \mathbf{1}_{(\alpha^{\star}_{q,2},\infty)}, \mathbf{1}_{(\alpha^{\star}_{q,4},\infty)}$ in shaded orange, green and blue colors are shown here. Clearly, $\alpha^{\star}_{q,1}$ is very conservative and with $k = 4$, the threshold $\alpha^{\star}_{q,4}$ starts getting closer to the true threshold $\alpha_{\chi^{2}}$ given by \eqref{eqn_chi_squared_threshold}.}
    \label{fig_sim_normal_indicate}
\end{figure}

When the noises $w_t$ and $v_t$ are truly Gaussian, it is evident from Fig. \ref{fig_sim_normal_reach} that the reachable set corresponding to the thresholds $\alpha^{\star}_{q,1} = \alpha^{\star}_{r,2} = 40$ is conservative and ensures that the false alarm does not exceed 5\% but this also provides the attacker with the ability to launch a larger attack. However, with the knowledge of additional moments, the detector thresholds $\alpha^{\star}_{q,2} = 10.7684$, $\alpha^{\star}_{q,4} = 9.1315$ get tightened with $\alpha^{\star}_{q,4} \leq \alpha^{\star}_{q,2} \leq \alpha^{\star}_{q,1}$ as shown in Fig. \ref{fig_sim_normal_indicate}. Further, this threshold tightening limits the attacker's ability to launch a larger attack which is depicted through the reachable sets corresponding to the thresholds $\alpha^{\star}_{q,1}, \alpha^{\star}_{q,2}, \alpha^{\star}_{q,4}$ as shown in Fig. \ref{fig_sim_normal_reach}. Subsequently, the false alarm rate corresponding to the threshold $\alpha_{\chi^{2}}$ was 5\% as expected and with the thresholds $\alpha^{\star}_{q,4}, \alpha^{\star}_{q,2}, \alpha^{\star}_{q,1}$, it dropped to $1\%, 0.45\%, 0\%$ respectively. When noises are truly multi-variate Laplacian (which has heavier tails than normal distribution with same mean and covariance), it resulted in thresholds $\alpha^{\star}_{q,1} = 39.83$, $\alpha^{\star}_{q,2} = 17.23, \alpha^{\star}_{q,4} = 16.54$ with false alarm rates $0\%, 0.9\%, 1\%$ respectively. Thus, no matter what distributions satisfying \eqref{eqn_ambig_set_w_high}, \eqref{eqn_ambig_set_v_high} govern the noises $w_t$, $v_t$ respectively, the inclusion of higher-order moments restricts the attacker's potential impact through a tightened detector threshold. 

\section{Conclusion \& Future Outlook}\label{sec_conclusion} 
We have proposed a distributionally robust approach to form the $k$ moments based ambiguity set for the detection measure data and used it to tune the anomaly detectors for a desired false alarm rate. We found a detector threshold which guaranteed that the false alarm rate did not exceed a desired value using a semidefinite program. We have demonstrated the effectiveness of our proposed approach with a numerical example. Further, our approach using higher-order moments restricted the attacker's potential impact. Future works include addressing the problems associated with the data-driven formulation with moment estimation uncertainty and securing nonlinear cyberphysical systems with distributionally robust unscented Kalman filter based state estimation.   


\bibliographystyle{IEEEtran}
\bibliography{security}

\begin{thebibliography}{10}
\providecommand{\url}[1]{#1}
\csname url@samestyle\endcsname
\providecommand{\newblock}{\relax}
\providecommand{\bibinfo}[2]{#2}
\providecommand{\BIBentrySTDinterwordspacing}{\spaceskip=0pt\relax}
\providecommand{\BIBentryALTinterwordstretchfactor}{4}
\providecommand{\BIBentryALTinterwordspacing}{\spaceskip=\fontdimen2\font plus
\BIBentryALTinterwordstretchfactor\fontdimen3\font minus
  \fontdimen4\font\relax}
\providecommand{\BIBforeignlanguage}[2]{{%
\expandafter\ifx\csname l@#1\endcsname\relax
\typeout{** WARNING: IEEEtran.bst: No hyphenation pattern has been}%
\typeout{** loaded for the language `#1'. Using the pattern for}%
\typeout{** the default language instead.}%
\else
\language=\csname l@#1\endcsname
\fi
#2}}
\providecommand{\BIBdecl}{\relax}
\BIBdecl

\bibitem{murguia2019model}
C.~Murguia and J.~Ruths, ``On model-based detectors for linear time-invariant
  stochastic systems under sensor attacks,'' \emph{IET Control Theory \&
  Applications}, vol.~13, no.~8, pp. 1051--1061, 2019.

\bibitem{umsonst2018anomaly}
D.~Umsonst and H.~Sandberg, ``Anomaly detector metrics for sensor data attacks
  in control systems,'' in \emph{2018 Annual American Control Conference
  (ACC)}.\hskip 1em plus 0.5em minus 0.4em\relax IEEE, 2018, pp. 153--158.

\bibitem{renganathan2020distributionally}
V.~Renganathan, N.~Hashemi, J.~Ruths, and T.~H. Summers, ``Distributionally
  robust tuning of anomaly detectors in cyber-physical systems with stealthy
  attacks,'' in \emph{2020 American Control Conference (ACC)}.\hskip 1em plus
  0.5em minus 0.4em\relax IEEE, 2020, pp. 1247--1252.

\bibitem{khojasteh2020learning}
M.~J. Khojasteh, A.~Khina, M.~Franceschetti, and T.~Javidi, ``Learning-based
  attacks in cyber-physical systems,'' \emph{IEEE Transactions on Control of
  Network Systems}, 2020.

\bibitem{martinez_tuning}
D.~{Li} and S.~{Martínez}, ``High-confidence attack detection via
  wasserstein-metric computations,'' \emph{IEEE Control Systems Letters},
  vol.~5, no.~2, pp. 379--384, 2021.

\bibitem{zetter2014countdown}
K.~Zetter, \emph{Countdown to Zero Day: Stuxnet and the launch of the world's
  first digital weapon}.\hskip 1em plus 0.5em minus 0.4em\relax Broadway books,
  2014.

\bibitem{cherepanov2017industroyer}
A.~Cherepanov, ``Win32/industroyer, a new threat for industrial control
  systems,'' \emph{White Paper. {ESET}}, 2017.

\bibitem{dr_wiesemann}
W.~Wiesemann, D.~Kuhn, and M.~Sim, ``Distributionally robust convex
  optimization,'' \emph{Operations Research}, vol.~62, no.~6, pp. 1358--1376,
  2014.

\bibitem{dr_goh}
J.~Goh and M.~Sim, ``Distributionally robust optimization and its tractable
  approximations,'' \emph{Operations research}, vol.~58, no. 4-part-1, pp.
  902--917, 2010.

\bibitem{dr_peyman}
P.~M. Esfahani and D.~Kuhn, ``Data-driven distributionally robust optimization
  using the wasserstein metric: Performance guarantees and tractable
  reformulations,'' \emph{Mathematical Programming}, vol. 171, no. 1-2, pp.
  115--166, 2018.

\bibitem{navid}
N.~Hashemi, C.~Murguia, and J.~Ruths, ``A comparison of stealthy sensor attacks
  on control systems,'' in \emph{2018 Annual American Control Conference
  (ACC)}, June 2018, pp. 973--979.

\bibitem{navid_GMM}
N.~Hashemi and J.~Ruths, ``Generalized chi-squared detector for lti systems
  with non-gaussian noise,'' in \emph{2019 American Control Conference
  (ACC)}.\hskip 1em plus 0.5em minus 0.4em\relax IEEE, 2019, pp. 404--410.

\bibitem{lasserre2010moments}
J.-B. Lasserre, \emph{Moments, positive polynomials and their
  applications}.\hskip 1em plus 0.5em minus 0.4em\relax World Scientific, 2010,
  vol.~1.

\bibitem{lasserre2001global}
J.~B. Lasserre, ``Global optimization with polynomials and the problem of
  moments,'' \emph{SIAM Journal on optimization}, vol.~11, no.~3, pp. 796--817,
  2001.

\bibitem{curto2008extremal}
R.~E. Curto, L.~A. Fialkow, and H.~M. M{\"o}ller, ``The extremal truncated
  moment problem,'' \emph{Integral Equations and Operator Theory}, vol.~60,
  no.~2, pp. 177--200, 2008.

\bibitem{helton2012semidefinite}
J.~W. Helton and J.~Nie, ``A semidefinite approach for truncated k-moment
  problems,'' \emph{Foundations of Computational Mathematics}, vol.~12, no.~6,
  pp. 851--881, 2012.

\bibitem{bertsimas_popescu}
D.~Bertsimas and I.~Popescu, ``Optimal inequalities in probability theory: A
  convex optimization approach,'' \emph{SIAM Journal on Optimization}, vol.~15,
  no.~3, pp. 780--804, 2005.

\bibitem{shawe2003estimating}
J.~Shawe-Taylor and N.~Cristianini, ``Estimating the moments of a random vector
  with applications,'' in \emph{19th Colloque sur le traitement du signal et
  des images, FRA, 2003}.\hskip 1em plus 0.5em minus 0.4em\relax GRETSI, Groupe
  d’Etudes du Traitement du Signal et des Images, 2003.

\bibitem{delage2010distributionally}
E.~Delage and Y.~Ye, ``Distributionally robust optimization under moment
  uncertainty with application to data-driven problems,'' \emph{Operations
  research}, vol.~58, no.~3, pp. 595--612, 2010.

\bibitem{boyd_sdp}
L.~Vandenberghe, S.~Boyd, and K.~Comanor, ``Generalized chebyshev bounds via
  semidefinite programming,'' \emph{SIAM review}, vol.~49, no.~1, pp. 52--64,
  2007.

\bibitem{chen_chebyshev}
X.~{Chen}, ``{A New Generalization of Chebyshev Inequality for Random
  Vectors},'' \emph{preprint arXiv:0707.0805}, Jul. 2007.

\bibitem{hashemi2020codesign}
N.~Hashemi and J.~Ruths, ``Co-design for security and performance: Geometric
  tools,'' \emph{preprint arXiv:2006.08739}, 2020.

\end{thebibliography}

\end{document}